\documentclass[12pt]{article}
\usepackage{amsmath}
\usepackage{amssymb,amsthm}
\usepackage{color}
\usepackage{graphicx}
\UseRawInputEncoding
\newtheorem{Thm}{Theorem}[section]
\newtheorem{theorem}[Thm]{Theorem}
\newtheorem{proposition}[Thm]{Proposition}

\newtheorem{remark}{Remark}[section]

\newtheorem{definition}[Thm]{Definition}

\title{Dynamics of states of infinite quantum systems as a cornerstone of the second law of thermodynamics}

\author{Walter F. Wreszinski\footnote{wreszins@gmail.com, 
Instituto de Fisica, Universidade de S\~ao Paulo (USP), Brasil}}        
        
\begin{document}

\maketitle

\begin{abstract}

\textbf{Keywords}: second law of thermodynamics, states and observables of infinite systems.

We improve on our version of the second law of thermodynamics as a deterministic theorem for quantum spin systems \cite{Wre} in two basic
aspects. The first concerns the general statement of the second law: spontaneous changes in an adiabatically closed system will always be
in the direction of increasing (mean) entropy, which rises to a maximal value. In order to arrive at this statement, we have to extend 
the previous definition of adiabatic transformation, in order to include sudden interactions, and thereby generalize the ``barrier model''. 
The cornerstone of the second law is then seen to be the fact that the dynamics may induce a basic structural transition between states of 
infinite systems in the limit of large times, which, physically, represent times much larger than specific relaxation times. Two specific 
examples concern the transition from pure to mixed states in two different universality classes of dynamics in one dimension, one being the 
exponential model, which has exactly soluble dynamics and does not display a phase transition, the other the Dyson model(s), for which 
neither the dynamics nor the statistical mechanics are exactly soluble, and which do exhibit a (ferromagnetic) phase transition, 
first demonstrated by Dyson. It is also shown, as a consequence of results of Albert and Kiessling on the Cloitre function, that there is 
strong graphical evidence that the dynamics of the Dyson models are chaotic for large times, and, thereby, that the mechanisms of approach 
to equilibrium differ for these two universality classes in a profound way, which depend exclusively on the dynamics, and not on the states 
and observables.      

\end{abstract}

\section{Introduction: states and observables for classical systems, chaotic states, and the smoothing property of densities}

In his fundamental work \cite{Di}, Dirac emphasized that the basic building blocks of quantum mechanics are \textbf{states} and \textbf{observables},
states being vectors in a (finite-dimensional) Hilbert space, and observables, self-adjoint operators acting on these states. In the present
paper we shall see that it is imperative, in order to arrive at a precise statement of the second law of thermodynamics in the sense of
Clausius, to generalize this notion of state to infinite dimensions, or, more precisely, to infinite number of degrees of freedom (ndof). In
order to sharpen our intuition, it appears to be useful to undertake a preliminary investigation of the situation in connection with
classical dynamical systems, which leads us to the probabilistic properties of (deterministic) dynamical systems \cite{LaMa}. Since generic
dynamical systems are chaotic, we choose to look at one of the simplest chaotic systems, the so-called dyadic map \cite{LaMa}. 

Define $T_{2}$  by $T_{2} : X \to Y$ , where $X = [0, 1]$ and $Y = [0, 1]$, by
\begin{equation}
\label{(1.1)}
T_{2}x = fr(2x) \equiv 2x \mbox{mod}  1
\end{equation}
Above, $fr(x)\equiv (x-[x])$, where $[x]$  denotes the largest integer which is smaller or equal to x. Note that this mapping is not one-to-one 
(it is a so-called endomorphism). In fact, we see that the inverse image of a point $x$  is either $x/2$ or $\frac{x+1}{2}$. $T_{2}$ is called the 
dyadic transformation and leaves Lebesgue measure $\mu$ (on the line) invariant because the inverse image of a point $y$ is
\begin{equation}
\label{(1.2)}
T_{2}^{-1}(y) = \{\frac{y}{2}\} \cup \{\frac{y+1}{2}\}
\end{equation}
Indeed, from \eqref{(1.2)}, $\mu(T_{2}^{-1}([0, y])) = y$, which generalizes to
\begin{equation}
\label{(1.3)}
\mu(A) = \mu(T_{2}^{-1}(A)) \mbox{ for any Borel set } A \in [0, 1]
\end{equation}
$T_{2}$ has a simple alternative description. Let $x$ be given
by its expansion in basis 2, i.e.,
\begin{equation}
\label{(1.4)}
x = .\epsilon_{1}\epsilon_{2} \cdots = \sum_{n=1}^{\infty} \frac{\epsilon_{n}}{2^{n}} \mbox{ with } \epsilon_{n} \in \{0,1\}
\end{equation}
Then, it is easy to see that
\begin{equation}
\label{(1.5)}
T_{2}x = .\epsilon_{2}\epsilon_{3} \cdots \mbox{ and, in general } T_{2}^{n}x = .\epsilon_{n+1}\epsilon_{n+2} \cdots
 \end{equation}
that is, $T_{2}$ is the ``one-sided shift'' in this representation.
Given a Lebesgue integrable function, i.e.,
\begin{equation}
\label{(1.6)}
f \in  L^{1}(0, 1)
\end{equation}
we may define the Ruelle-Perron-Frobenius operator $P$ (\cite{LaMa}) by
\begin{equation}
\label{(1.7)}
\int_{A}(P f )(x)dx = \int_{T_{2}^{-1}(A)} f(x) dx
\end{equation}
If
\begin{equation}
\label{(1.8)}
f_{0}(x) \equiv 1
\end{equation}
it follows that
\begin{equation}
\label{(1.9)}
Pf_{0} \equiv  1
\end{equation}
Let $A = [0, x]$. From \eqref{(1.7)}, it follows that
\begin{eqnarray*} 
(Pf)(x) = \frac{d}{dx}(\int_{0}^{\frac{x}{2}}f(u)du + \int_{\frac{x}{2}}^{\frac{x+1}{2}} f(u)du =\\
= \frac{1}{2}[f(\frac{x}{2})+f(\frac{x+1}{2})]
\end{eqnarray*}
from which, by iteration,
\begin{equation}
\label{(1.10)}
(P^{n}f)(x) = \frac{1}{2^{n}}\sum_{k=0}^{2^{n}-1} f(\frac{x+k}{2^ {n}})
\end{equation}
We shall say that f is a density if $f\ge 0 $ almost everywhere in $(0,1)$ 
$\int_{0}^{1} f(x) dx = 1$; $f_{0}$, given by \eqref{(1.8)}, is the uniform density.
The words a.e. stand for almost everywhere, that is, in the complement of a set with (Lebesgue) measure zero.
By \eqref{(1.7)}, P maps densities to densities, and, by \eqref{(1.10)},
\begin{equation}
\label{(1.11)}
\lim_{n \to \infty} (P^{n}(f))(x) = \int_{0}^{1} f(y) dy = 1
\end{equation}
Together with \eqref{(1.8)}, \eqref{(1.11)} is a form of the property of approach to equilibrium  (see also \cite{LaMa}, 
Theorem 4.4.1, p.65): the evolution of any density tends (for (discrete) time $n \to \infty$) to the unique invariant density, the uniform
density $f_{0}$ given by \eqref{(1.8)}.

Of particular importance above is that $P$ is not defined on individual orbits ot the map, which would correspond
to taking $f$ in \eqref{(1.7)} to be a ``delta function'', which is not allowed by condition \eqref{(1.6)}. Indeed, these individual orbits
behave rather erratically. Consider two points $x_{1} , x_{2}$, both in [0, 1], close in the sense that the first n digits in
the expansion \eqref{(1.9)} are identical: it follows that $T_{2}x_{1}$ and $T_{2}x_{2}$ differ already in the first digit: 
an initially exponentially small difference $2^{-n}$ is magnified by the evolution to one of order O(1). When this property holds
for $n$ arbitrarily large, as in the present example, one speaks of the \textbf{exponential sensitivity to initial conditions}.

Exponential sensitivity to initial conditions is a precise definition of the \textbf{chaotic} element of the dynamics: for almost all
$x_{0}$ (in the sense of Lebesgue measure, i.e., excluding a set of zero Lebesgue measure) $\gamma$, $T_{2}^{n}x_{0}$ comes arbitrarily
close to almost any $x \in [0,1]$, if $n$ is taken sufficiently large, or, in other words, it ``fills'' the whole interval uniformly
throughout the evolution. Therefore $\lim_{n \to \infty} T_{2}^{n}x_{0}$ \textbf{does not exist for a.e. $x_{0}$}. The set $\gamma$ consists
of the finite dyadic numbers $x_{f}$, i.e., those whose dyadic expansion ~\eqref{(1.4)} is finite; it is immediate from the definition
that $T_{2}^{n}x_{f} \to 0 \mbox{ or } 1$, the latter being the fixed points of the dyadic map: they are untypical in the sense that
they do not fill the interval uniformly.

We now come to the subject of \textbf{states and observables for the dyadic map and chaotic states}. For an alternative treatment of
states of classical systems, see Ruelle's article \cite{Ru}. See also the paper by Illner and Neunzehrt \cite{IN}, in which the change
between the two topologies as below is held responsible for the fact that in the limit of a process with time-reversal invariance in the
Boltzmann equation a ``time-arrow'' is generated. In order to characterize these two topologies, we begin with a definition.

\begin{definition}
\label{Definition 1.1}
A density $f$ is a function $f \ge 0$ a.e. in $(0,1)$ such that $\int_{0}^{1} f(x)dx = 1$. $f_{0}$, given by ~\eqref{(1.8)}, is
the \textbf{uniform density}.
\end{definition}

The present model may strengthen our intuition behind our choices of states and observables. Indeed, taking $f$ as a bounded continuous
function on $[0,1]$, as an observable, and, as a state, the delta measure
\begin{equation}
\label{(1.12)}
\omega_{1}(f) \equiv \delta_{x}(f) \mbox{ for } x \in [0,1]
\end{equation}
then the ``time-evolution'' leads to
\begin{equation}
\label{(1.13)}
\omega_{1,n}(f) = \delta_{x_{n}}(f) = f(x_{n})
\end{equation} 
It follows, however, that, for a.e. $x_{0}$, $x_{n}=T_{2}^{n}x_{0}$ comes arbitrarily close to almost any $x \in [0,1]$ \cite{LaMa}. Thus,
$f(x_{n})$ cannot converge (for a.e. $x_{0}$), and \textbf{there is no approach to equilibrium}.

Consider, now, as observables, the \textbf{densities} defined in ~\eqref{Definition 1.1}. Define, now, as observable the entropy $\eta$
by $\eta(u) \equiv -u \log(u) \mbox{ with } \eta(0) \equiv 0$, and consider the states
\begin{equation}
\label{(1.14)}
\omega_{2}(f) \equiv \int_{0}^{1} dx \eta(f)(x)
\end{equation}
By ~\eqref{(1.14)}, together with Theorem 9.2.1 and remark 9.2.1 of \cite{LaMa}, the sequence of ``time-evolved states'', where $P$
denotes the Ruelle-Perron-Frobenius operator ~\eqref{(1.7)},
\begin{equation}
\label{(1.15)}
\omega_{2,n}(f) \equiv \int_{0}^{1} dx \eta(P^{n}(f))(x)
\end{equation}
with $n=1.2, \cdots$, is monotone increasing in $n$, and bounded above by zero, and therefore converges. Thus: 
\textbf{there is approach to equilibrium}! This is the \textbf{smoothing property of densities}. It is due to the fact that densities
require, for their definition, a (non-countably) infinite set of points, i.e., they must be supported on a set of non-zero Lebesgue
measure. Equation ~\eqref{(1.15)} is the analogue of the forthcoming definition of the mean entropy as a functional of states of
\textbf{infinite} systems in quantum statistical mechanics.

We may finally observe, for the purpose of further comparison with the quantum case, that the state $\omega_{2}$ above is manifestly
\textbf{not} invariant under ``time-reversal'', since the dynamics, specified by $P$, is not. In fact, definitions ~\eqref{(1.14)} of
$\omega_{2}$ and ~\eqref{(1.7)} of $P$ show that, for suitable ``initial'' observables $f$, $\omega_{2}(f)$ exhibits exponential
sensitivity to initial conditions, due to the fact that the dynamics $T_{2}$ has this property: we say that $\omega_{2}(f)$ is a
\textbf{chaotic state}.

The present model seems thus to suggest that the main issue in the problem of irreversibility and approach to equilibrium in quantum
mechanics may also lie in the choice of suitable states and observables, in particular their \textbf{non-invariance} under time-reversal. 
We now examine this issue in greater detail.

\section{Dynamics of states and the Schr\"{o}dinger paradox. The case of states of infinite systems}

In the present year, which marks the centennary of (Schr\"{o}dinger) quantum mechanics, it may be specially significant to handle a topic
to which Schr\"{o}dinger provided, in 1935, the initial crucial input: the \textbf{Schr\"{o}dinger paradox} \cite{Schr}. This topic is the theory
of irreversibility and the second law of thermodynamics, already treated in \cite{Wre}, but in an incomplete way. In this paper we 
attempt to supply the missing steps and, at the same time, clarify the principles underlying the choice between the von Neumann and the
Boltzmann entropy, which, to our knowledge, has not been done before.

\subsection{Models of qss and their dynamical properties}

A typical example of the models we shall use is the generalized Heisenberg Hamiltonian (generalized Ising model (gIm) if $J_{1}=0$)
for any finite region $\Lambda \subset \mathbf{Z}^{\nu}$
\begin{equation}
\label{(2.1)}
H_{\Lambda} = -2\sum_{x,y \in \Lambda}[J_{1}(x-y)(S_{x}^{1}S_{y}^{1}+S_{x}^{2}S_{y}^{2})+J_{2}(x-y)S_{x}^{3}S_{y}^{3}]
\end{equation}
where
\begin{equation}
\label{(2.2)}
\sum_{x \in \mathbf{Z}^{\nu}}|J_{i}(x)|< \infty \mbox{ and } J_{i}(0)=0 \mbox{ for } i=1,2
\end{equation}
Above, $S_{x}^{i}=1/2 \sigma_{x}^{i}, i=1,2,3$ and $\sigma_{x}$ are the
Pauli matrices at the site $x$. Above, $H_{\Lambda}$ acts on the Hilbert space ${\cal H}_{\Lambda}=\otimes_{x \in \Lambda}\mathbf{C}_{x}^{2}$.
We define
${\cal A}(\Lambda) = B({\cal H}_{\Lambda})$
These local algebras satisfy:
1.) Causality: $[{\cal A}(B),{\cal A}(C)]=0$ if $B \cap C = \phi$ and 2.) Isotony: $B \subset C \Rightarrow {\cal A}(B) \subset {\cal A}(C)$.
${\cal A}_{L} = \cup_{B} {\cal A}(B)$ is termed the \emph{local} algebra; its closure with respect to the norm, the \emph{quasilocal} algebra
(observables which are, to arbitrary accuracy, approximated by observables attached to a \emph{finite} region). The norm is
$A \in B({\cal H}_{\Lambda}) \to ||A|| = sup_{||\Psi|| \le 1} ||A \Psi||$, $\Psi \in {\cal H}_{\Lambda}$.
An automorphism $\tau_{t}(A)$ equals the norm limit of $\lim_{\Lambda \nearrow \infty} \exp(iH_{\Lambda}t) A \exp(-iH_{\Lambda}t)$ 
for $A \in {\cal A}(\Lambda)$. The limit $\lim_{\Lambda \nearrow \infty}$ will denote the van Hove limit \cite{BRo2}, p.287).

The notion of state generalizes to systems with
infinite number of degrees of freedom $\omega(A)= \lim_{\Lambda \nearrow \infty} \omega_{\Lambda}(A)$, at first for $A \in {\cal A}_{L}$
and then to ${\cal A}$.

Each state $\omega$ defines a representation $\Pi_{\omega}$ of ${\cal A}$ as bounded operators on a Hilbert space ${\cal H}_{\omega}$ with
cyclic vector $\Omega_{\omega}$ (i.e., $\Pi_{\omega}({\cal A}) \Omega_{\omega}$ is dense in ${\cal H}_{\omega}$), such that 
$\omega(A) = (\Omega_{\omega}, \Pi_{\omega}(A) \Omega_{\omega})$ (the GNS construction). The strong closure of $\Pi_{\omega}({\cal A})$ is
a von Neumann algebra, with commutant $\Pi_{\omega}({\cal A})^{'}$, which is the set of bounded operators on ${\cal H}_{\omega}$ which
commute with all $\Pi_{\omega}({\cal A})$, and the center is defined by $Z_{\omega}= \Pi_{\omega}({\cal A}) \cap \Pi_{\omega}({\cal A})^{'}$.

Considering quantum spin systems on $\mathbf{Z}^{\nu}$, we shall consider only space-translation-invariant states, i.e., such that
\begin{equation}
\label{(2.3)}
\omega \circ \tau_{x} = \omega \mbox{ for all } x \in \mathbf{Z}^{\nu}
\end{equation} 

For states of finite ndof, we assume periodic boundary conditions.

A \textbf{pure}, or ergodic state is a state which cannot be
written as a proper convex combination of two distinct states $\omega_{1}$ and $\omega_{2}$, i.e., the following does \emph{not}
hold:
\begin{equation}
\label{(2.4)}
\omega = \alpha \omega_{1} + (1-\alpha) \omega_{2} \mbox{ with } 0<\alpha<1
\end{equation}
If the above formula is true, it is natural to regard $\omega$ as a mixture of two pure ``phases''
$\omega_{1}$ and $\omega_{2}$, with proportions $\alpha$ and $1-\alpha$, respectively.

Let $\Gamma = \mathbf{Z}^{\nu}$ and $P_{0}(\Gamma)$ denote the collection of all finite parts of $\Gamma$. We may generalize ~\eqref{(2.1)}
to quantum spin systems described by a Hamiltonian for any finite region $\Lambda \subset \mathbf{Z}^{\nu}$ given by
\begin{equation}
\label{(2.5)}
H_{\Phi}(\Lambda) = \sum_{X \subset \Lambda} \Phi(X)
\end{equation}
where $\Phi$ is a translation-invariant interaction function from $P_{0}(\Gamma)$ into the Hermitian elements of the quasi-local
algebra ${\cal A}$, such that $\Phi(X) \in {\cal A}(X)$ and
\begin{equation}
\label{(2.6)}
\Phi(X+x) = \tau_{x}(\Phi(X)) \forall X \subset \mathbf{Z}^{\nu}, \forall x \in \mathbf{Z}^{\nu}
\end{equation}
We shall for simplicity restrict ourselves to quantum spin systems with two-body interactions. In this case, stability requires that
\begin{equation}
\label{(2.7)}
\sum_{0 \ne x \in \mathbf{Z}^{\nu}} ||\Phi(\{0,x\})|| < \infty
\end{equation}

A prototype of the above is the \textbf{anisotropic ferromagnetic Heisenberg model}: it corresponds to the choices:
\begin{equation}
\label{(2.8)} 
J_{1} < 0 \mbox{ and } J_{2} < 0 \mbox{ with } |J_{1}| < |J_{2}| \mbox{ and } J_{1}(x)=J_{2}(x)=0 \mbox{ except if } x,y \mbox{ are n.n. }
\end{equation}
where $n.n.$ denote nearest-neighbors in the lattice $\mathbf{Z}^{\nu}$. If $J_{2} \equiv 0$, we speak of \textbf{the X-Y model}. The assumption 
$J_{1} \equiv 0$ in ~\eqref{(2.1)}, with $J_{2}=J$ satisfying ~\eqref{(2.2)}, defines the \textbf{generalized Ising model} (gIm) studied by Emch 
\cite{Em} and Radin \cite{Ra}: 
\begin{equation}
\label{(2.9)}
H_{\Lambda} = \frac{1}{2} \sum_{x,y \in \Lambda} J(|x-y|) \sigma_{x}^{3}\sigma_{y}^{3}
\end{equation}
Let, now, $\nu=1$, and define two subclasses of models of the gIm:
\begin{equation}
\label{(2.10)}
J(|x|) = -\xi^{-|x|} \mbox{ with } \xi > 1 \mbox{ and } x \ne 0 (\mbox{ \emph{exponential model} } E_{\xi})
\end{equation}
and
\begin{equation}
\label{(2.11)}
J(|x|) = -\frac{1}{|x|^{\alpha}} \mbox{ with } \alpha > 1 \mbox{ and } x \ne 0 (\mbox{ \emph{Dyson model} } D_{\alpha})
\end{equation}
The above conditions on $\xi$ and $\alpha$ are required for stability \eqref{(2.5)}. Under certain conditions, the Dyson
model displays a ferromagnetic phase transition \cite{Dy}. 
With $P_{0}(\mathbf{Z}^{\nu})$ denoting the set of all finite subsets of $\mathbf{Z}^{\nu}$ as before, for each triple 
$A=(A_{1},A_{2},A_{3})$, where the $A_{i} \in P_{0}(\mathbf{Z}^{\nu}, i=1,2,3$ are pairwise disjoint, define
\begin{equation}
\label{(2.12)}
\sigma^{A} \equiv \prod_{x_{1}\in A_{1}} \sigma_{x_{1}}^{1} \prod_{x_{2}\in A_{2}} \sigma_{x_{2}}^{2} \prod_{x_{3}\in A_{3}} \sigma_{x_{3}}^{3} 
\end{equation}
Let $\omega$ be any state satisfying
\begin{equation}
\label{(2.13)}
\omega(\sigma^{A}) = 0 \forall A \mbox{ such that } A_{3} \ne \phi
\end{equation}

We have the following:

\begin{theorem}
\label{th:2.1}
For the gIm with interaction either of the exponential model $E_{\xi}$ with $\xi$ a transcendental number, or the Dyson
model $D_{\alpha}$, and any initial state (at $t=0$) $\omega$ satisfying \eqref{(2.13)},
\begin{equation}
\label{(2.14)}
\lim_{t \to \infty} (\omega \circ \tau_{t}) = \omega_{eq} \equiv \otimes_{x \in \mathbf{Z}} (2^{-1}Tr_{x}) 
 \end{equation}
where the limit on the l.h.s. of \eqref{(2.14)} is taken in the weak* topology.
\end{theorem}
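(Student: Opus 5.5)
Since the dynamics $\tau_t$ of the gIm commutes with every $\sigma^3_x$, we can compute $\tau_t(\sigma^A)$ explicitly. By linearity and density of the span of the $\sigma^A$ in ${\cal A}$, and because all states have norm one, weak* convergence reduces to showing
\[
\lim_{t\to\infty}\omega(\tau_t(\sigma^A)) = \omega_{eq}(\sigma^A)
\]
for every triple $A$. Here $\omega_{eq}(\sigma^A)=0$ unless $A=(\phi,\phi,\phi)$, in which case both sides equal $1$.

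First, for finite $\Lambda$ we write $H_\Lambda$ as a function of the commuting $\sigma^3$'s. Each $\sigma^1_x$ or $\sigma^2_x$ anticommutes with $\sigma^3_x$, so $e^{iH_\Lambda t}\sigma^{\pm}_x e^{-iH_\Lambda t}=\sigma^{\pm}_x\exp(\pm 2it\sum_{y\ne x}J(|x-y|)\sigma^3_y)$ up to the normalisation conventions of \eqref{(2.9)}. The same holds for products over $A_1\cup A_2$, with additional c-number phases from pairs inside $A_1\cup A_2$. Taking the van Hove limit, which exists in norm by \eqref{(2.2)}, we obtain
\[
\tau_t(\sigma^A)=\sigma^A\prod_{y\notin A_1\cup A_2}\bigl(\cos(2t\,c_y)+i\,\epsilon\, \sin(2t\,c_y)\,\sigma^3_y\bigr),
\]
where $c_y=\sum_{x\in A_1\cup A_2}\pm J(|x-y|)$. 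The signs are determined by the $\sigma^{\pm}$ decomposition, and it is cleaner to expand $\sigma^1,\sigma^2$ into $\sigma^\pm$ first. The product converges in norm because $\sum_y|c_y|<\infty$.

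Second, we expand the infinite product. It is a norm-convergent sum of terms $\sigma^A\sigma^3_B\cdot(\text{scalar})$, where $B$ runs over finite sets. The term with $B=\phi$ has scalar coefficient $\prod_y\cos(2tc_y)$; every other term is of the form $\sigma^{A'}$ with $A'_3\neq\phi$ (when $A_1\cup A_2\ne\phi$), or has $A_3\ne\phi$ already. By \eqref{(2.13)}, $\omega$ annihilates all such terms. If $A_1\cup A_2=\phi$, then $\tau_t(\sigma^A)=\sigma^A$, and hypothesis \eqref{(2.13)} gives $0=\omega_{eq}(\sigma^A)$ unless $A$ is empty. The problem therefore reduces to showing
\[
\lim_{t\to\infty}\prod_{y}\cos(2t\,c_y)=0
\]
whenever $A_1\cup A_2\ne\phi$ and $A_3=\phi$. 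This is exactly the Emch--Radin mechanism.

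Third, the main obstacle is proving this infinite-product decay. I would follow Radin and Emch. The function $t\mapsto\prod_y\cos(2tc_y)$ is the Fourier transform of an infinite Bernoulli convolution, namely the distribution of $\sum_y \pm 2c_y$ with independent fair signs. Its decay at infinity (the Riemann--Lebesgue property) holds if the convolution measure is absolutely continuous, or more generally if the product does not stay bounded away from zero along some sequence $t_n\to\infty$.

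For the Dyson model $D_\alpha$, the $c_y$ decay only polynomially. Then $\sum_y c_y^2\sin^2$-type lower bounds give $\sum_y(1-|\cos 2tc_y|)\to\infty$: for large $t$ there are about $t^{1/\alpha}$ sites $y$ with $tc_y$ of order one and equidistributed mod $\pi$. Using $|\cos u|\le e^{-\mathrm{dist}(u,\pi\mathbf Z)^2/2}$, this yields decay.

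For $E_\xi$, the $c_y$ are essentially geometric in $\xi^{-|y|}$. There the product resembles the Fourier transform of a Bernoulli convolution, and Pisot-type $\xi$ may fail to give decay. Transcendence of $\xi$ excludes such arithmetic resonances (Erdős--Salem), and I would invoke Radin's result \cite{Ra} (or Emch \cite{Em}) for this step rather than reprove it. Combining the three steps yields \eqref{(2.14)}.
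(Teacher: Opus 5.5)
Since the paper's proof consists only of a citation to Radin's Corollary, reconstructing the Emch--Radin mechanism is the right route. However, your second step contains a false reduction. When $A_1\cup A_2\neq\phi$ and $A_3\neq\phi$, you claim that every term in the expansion of $\tau_t(\sigma^A)$ still carries a $\sigma^3$ factor and is therefore annihilated by \eqref{(2.13)}. That is wrong. Your product runs over $y\notin A_1\cup A_2$, which includes the sites $y\in A_3$. Choosing the $\sin(2tc_y)\,\sigma^3_y$ branch at every $y\in A_3$ cancels the factor $\sigma^3_{A_3}$ already present in $\sigma^A$, because $(\sigma^3_y)^2=\mathbf{1}$. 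What remains is a pure $\sigma^1/\sigma^2$ monomial on $A_1\cup A_2$, and \eqref{(2.13)} says nothing about its expectation.

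Here is a concrete case. Take the translation-invariant product state with every spin in the $+1$ eigenvector of $\sigma^1$; it satisfies \eqref{(2.13)}. With the normalisation \eqref{(2.9)}, one has $\tau_t(\sigma^2_0)=\sigma^2_0\cos(2tK_0)+\sigma^1_0\sin(2tK_0)$, where $K_0=\sum_{y\neq 0}J(|y|)\sigma^3_y$. Hence
\[
\omega(\tau_t(\sigma^2_0\sigma^3_1))=\sin(2tJ(1))\prod_{y\neq 0,1}\cos(2tJ(|y|)),
\]
which is nonzero for small $t>0$. So the case $A_3\neq\phi$ is not trivial, and ``$\prod_y\cos(2tc_y)\to 0$ for $A_3=\phi$'' is not the full statement you need.

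What you actually need is the following. For every sign pattern $\epsilon$ produced by expanding $\sigma^1,\sigma^2$ into $\sigma^{\pm}$, the product $\prod_{y\notin A_1\cup A_2\cup A_3}\cos(2tc^{\epsilon}_y)$ must tend to zero. Removing the extra factors at $y\in A_3$ can only increase the modulus, so this does not follow from decay of the product you considered. Your mechanism does deliver it, because it only uses sites with $|y|\to\infty$.

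For $D_\alpha$ two corrections are needed.
\begin{itemize}
\item For patterns with $\sum_x\epsilon_x=0$ (e.g.\ $\sigma^+_x\sigma^-_{x'}$), $c^{\epsilon}_y$ decays like $|y|^{-\alpha-k}$, not $|y|^{-\alpha}$. Here $k<|A_1\cup A_2|$ is the first index with $\sum_x\epsilon_x x^k\neq 0$, which exists by the Vandermonde argument. The number of relevant sites is therefore of order $t^{1/(\alpha+k)}$, not $t^{1/\alpha}$.
\item Equidistribution mod $\pi$ is not needed. Eventual monotonicity of $c^{\epsilon}_y$ already gives that many sites with $2t|c^{\epsilon}_y|$ in a fixed compact subinterval of $(0,\pi)$, and each such site contributes a factor bounded by a constant $<1$.
\end{itemize}

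For $E_\xi$, the coefficients are exactly, up to sign, $\xi^{-y}\sum_x\epsilon_x\xi^{x}$ for $y>\max(A_1\cup A_2)$ and $\xi^{y}\sum_x\epsilon_x\xi^{-x}$ for $y<\min(A_1\cup A_2)$. So the tails are rescaled geometric products. Transcendence is used not only to exclude Pisot resonances but, more basically, to guarantee that these two sums do not vanish. If both vanished, only finitely many $c^{\epsilon}_y$ would be nonzero and the product would be almost periodic. Deferring this case to Radin, as the paper does for the entire theorem, is acceptable once the reduction is corrected.
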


The equilibrium state above is the \textbf{tracial state}, defined, for general dimension $\nu$, as
\begin{equation}
\label{(2.15)}
\omega_{tr} \equiv \otimes_{x \in \mathbf{Z}^{\nu}} (2^{-1}Tr_{x})
\end{equation}
where the infinite sum over $x$ is short for the limit in the weak*-topology of the set of sums over finite regions. 

For the proof, see (\cite{Ra}, Corollary, p.2953).

Quantum spin systems are similar to quantum fields because of the now famous Lieb-Robinson bound (\cite{LR}, see also \cite{BRo2}, p.254).
Following the last reference, let us assume, instead of \eqref{(2.5)},
\begin{equation}
\label{(2.16)}
||\Phi||_{\lambda} \equiv \sum_{x \in \mathbf{Z}^{\nu}} ||\Phi(\{0,x\})|| \exp(\lambda |x|) < \infty
\end{equation}
for some $\lambda>0$. Then, for $A,B \in {\cal A}_{0}$,
\begin{equation}
\label{(2.17)}
||[(\tau_{x}\tau_{t})(A),B]|| \le 2||A||||B|| \exp[-|t|(\lambda |x|/|t|-2 ||\Phi||_{\lambda})]
\end{equation}
In \eqref{(2.17)},$(\tau_{x}\tau_{t})$ may be replaced by $(\tau_{t}\tau_{x})$. The commutator $[(\tau_{x}\tau_{t})(A),B]$ with
$B \in {\cal A}_{0}$ provides a measure of the dependence of the observation $B$ at the point $x$ at time $t$ at the origin at time $t=0$,
showing that this effect decreases exponentially with time outside the cone
$$
|x| < |t| (\frac{2||\Phi||_{\lambda}}{\lambda}) 
$$
Equation \eqref{(2.17)} means that physical disturbances propagate with ``group velocity'' bounded by
\begin{equation}
\label{(2.18)}
v_{\Phi} \equiv \inf_{\lambda} (\frac{2||\Phi||_{\lambda}}{\lambda})
\end{equation}

\subsection{The mean entropy, the Schr\"{o}dinger paradox and the time-arrow: the necessity of states of infinite systems}

For a finite quantum spin system the von Neumann entropy is defined as ($k_{B}=1$):
\begin{equation}
\label{(2.19)}
S_{\Lambda} = -Tr (\rho_{\Lambda} \log \rho_{\Lambda})
\end{equation}
We may view $\rho_{\Lambda}$ as a \emph{state} $\omega_{\Lambda}$ on ${\cal A}(\Lambda)$ - a positive, normed linear functional on ${\cal A}(\Lambda)$:
$\omega_{\Lambda}(A) = Tr_{{\cal H}_{\Lambda}} (\rho_{\Lambda} A) \mbox{ for } A \in {\cal A}(\Lambda)$
(positive means $ \omega_{\Lambda}(A^{\dag}A) \ge 0$, normed $\omega_{\Lambda}(\mathbf{1})=1$.)

We may correspondingly view $S_{\Lambda}$ as a \textbf{functional} of the state $\omega_{\Lambda}$. Accordingly, $S_{\Lambda}(t)$ will be a functional 
$S_{\Lambda}(\omega_{\Lambda,t})$ of the time-evolved state 
\begin{equation}
\label{(2.20)}
\omega_{\Lambda,t} \equiv Tr_{{\cal H}_{\Lambda}} (\rho_{\Lambda} \tau_{t}^{\Lambda}(A))
\end{equation}
where
\begin{equation}
\label{(2.21)}
\tau_{t}^{\Lambda}(A) \equiv \exp(iH_{\Lambda}t) A \exp(-iH_{\Lambda}t) \forall  A \in {\cal A}(\Lambda)
\end{equation}
By ~\eqref{(2.19)} and ~\eqref{(2.20)} and the cyclicity and invariance of the trace under unitary transformations, it follows that
\begin{equation}
\label{(2.22)}
\omega_{\Lambda,t} = \omega_{\Lambda,-t}
\end{equation}

We now come back to the second law. Assume, as we shall always do in this paper, that we are dealing with a \textbf{closed} system,
i.e., a system completely isolated from all external influences. This idealization is the starting point of traditional thermodynamics
\cite{tHW}, according to which the second law of thermodynamics, in the formulation of Clausius (\cite{tHW}, p.66) may be stated:

\textbf{Second law (Clausius)}
\textbf{Spontaneous changes in an adiabatically closed system will always be in the direction of increasing entropy, which rises to a maximum}.

We shall refer to the above statement as the dynamical formulation of the second law, assuming, as it happens in Nature, that it refers
to ``spontaneous changes in time''. Given that the von Neumann entropy is a functional of the time-evolved state, we may express the
dynamical formulation of the second law as
\begin{equation}
\label{(2.23)}
S_{\Lambda}(\omega_{\Lambda,t_{1}}) \le S_{\Lambda}(\omega_{\Lambda,t_{2}}) \mbox{ if } t_{1}<t_{2}
\end{equation} 
Since, however, $-t_{2} < -t_{1}$, ~\eqref{(2.6)} yields
\begin{equation}
\label{(2.24)}
S_{\Lambda}(\omega_{\Lambda,t_{2}}) \le S_{\Lambda}(\omega_{\Lambda,t_{1}})
\end{equation}
Equations ~\eqref{(2.23)} and ~\eqref{(2.24)} yield, together:
\begin{equation}
\label{(2.25)}
S_{\Lambda}(\omega_{\Lambda,t_{2}})  =  S_{\Lambda}(\omega_{\Lambda,t_{1}})
\end{equation}
Equation ~\eqref{(2.25)} is the expression of \textbf{Schr\"{o}dinger's paradox}, see \cite{Schr} and Lebowitz's inspiring review \cite{Leb}.

The reason why we have been so explicit in the above statement is that a variant of it also applies to infinite systems, as we now explain.
We have seen that the notion of state generalizes to systems with
infinite number of degrees of freedom $\omega(A)= \lim_{\Lambda \nearrow \infty} \omega_{\Lambda}(A)$, at first for $A \in {\cal A}_{L}$
and then to ${\cal A}$.
The state $\omega_{\Lambda}^{t}(A) = \omega_{\Lambda}(\exp(iH_{\Lambda}t)A\exp(-iH_{\Lambda}t))$ does \emph{not} have a limit
as $\Lambda \nearrow \infty$ because the spectrum is discrete and the state an almost periodic function of $t$.
The functional $S_{\Lambda}$ is a continuous functional of the state $\omega_{\Lambda,t}$, and, thus, also a quasi-periodic function
of $t$. As we shall see now, for states of infinite systems, however, the situation is \emph{entirely different}!

For a large system the \emph{mean entropy} is the natural quantity from the physical standpoint:
\begin{equation}
\label{(2.26)}
s(\omega) \equiv \lim_{\Lambda \nearrow \infty} (\frac{S_{\Lambda}}{|\Lambda|})(\omega)
\end{equation}
The mean entropy has the following two properties \cite{LanRo}:
\begin{itemize}
\item [$a.)$] $0 \le s(\omega) \le \log D$ where $D=2S+1$;
\item [$b.)$] $s$ is \emph {upper semicontinuous}, that is $lim sup_{n \to \infty} s(\omega_{n}) \le s(\omega)$.
\end{itemize}
In b.), $\omega_{n}$ is a sequence of states such that $\omega_{n} \to \omega$ 
in the weak*- topology, i.e., $\omega_{n} (A) \to \omega(A) \forall A \in {\cal A}$. 

We shall exclusively use the weak*-topology in the present paper. In the uniform or norm topolgy, the mean entropy is continuous, by
a theorem of Fannes \cite{Fa}, but there is no physical requirement which would reinforce the extra condition of uniformity defining
the norm topology. We may call this a condition of \textbf{physical naturalness}: it is very important, and, indeed, crucial. Our
switching to ``densities'', such as the mean entropy, is also of this nature, because these are the quantities of physical relevance-
the ``measurable quantities''. 

One simple example of b.) is a characteristic function of a closed set. The mean 
entropy's basic property is the property of \textbf{affinity} (\cite{LanRo}, \cite{BRo2}):
\begin{equation}
\label{(2.27)}
s(\alpha \omega_{1} + (1-\alpha) \omega_{2}) = \alpha s(\omega_{1}) + (1-\alpha) s(\omega_{2}) \mbox{ with } 0\le \alpha \le 1
\end{equation}

We now come back to the problem of the ``arrow of time''.

According to Penrose \cite{Pe}, the following is a list of phenomena associated to the ``arrow of time'': ``the second law of thermodynamics,
the expansion of the Universe,. the use of retarded potentials in electrodynamics, the decay of the $K^{0}$-meson, and our subjective 
experience that the time passes''.

Schr\"{o}dinger's paradox is also one of these phenomena, because, being a consequence of the time-reversal invariance of the dynamical 
laws, it can only be overcome by the \textbf{existence} of a time-arrow, which, as in the example of the dyadic map, may be associated to
the existence of the mathematical limit of infinite time for densities, whose physical significance, explained in \cite{Wre}, is of 
a universal time, much greater than the typical relaxation times of systems (in a given universality class). This is very similar to the situation in 
statistical mechanics, where mathematical discontinuities in the thermodynamic functions arise \textbf{only} in the thermodynamic limit. We thus
propose the following modified version of the second law:

\textbf{Modified Second law (Clausius)}
\textbf{The mean entropy of an adiabatically closed system rises monotonically to its maximum value}.

Strict uppersemicontinuity of the mean entropy, as in item b.) above requires the infinite time limit : the limit $n \to \infty$ in 
item b.) above is short for the limit $t_{n} \to \infty$, where $t_{n}, n=1,2, \cdots$ represents \textbf{any} sequence of time values 
tending to infinity.

\begin{proposition}
\label{prop:2.2}
Let the general interaction potential in ~\eqref{(2.13)} satisfy 
\begin{equation}
\label{(2.28)}
|\Phi(\{0,x\})| < C (1+|x|)^{-\alpha} \mbox{ with } \alpha > 2\nu
\end{equation}
Then, modified second law (Clausius) does not hold, unless, either the mean entropy remains constant in time, or a time-arrow exists.

\begin{proof}
By Theorems 2.6 and 3.1 of \cite{Wre}, see also \cite{NSY}, under the condition ~\eqref{(2.28)}, for any two fixed values of the time 
$t_{n_{1}}$ and $t_{n_{2}}$,
\begin{equation}
\label{(2.29)}
s(\omega_{t_{n_{1}}}) = s(\omega_{t_{n_{2}}}) = s(\omega_{t_{0}}) \equiv s_{0}
\end{equation}

The above equation is the precise infinite-volume analogue of ~\eqref{(2.23)}, which expresses Schr\"{o}dinger's paradox. If strict
upper semicontinuity is assumed in item b.) above, together with time-reversal invariance, we obtain
\begin{equation}
\label{(2.30)}
s(\omega_{-\infty}) > s_{0} 
\end{equation}
with the interpretation of $s(\omega_{-\infty})$ as the mean entropy at the ``time-reversed'' universal time, contradicting the monotonic
increase of the mean entropy assumed in modified second law (Clausius).
\end{proof}
\end{proposition} 

\subsection{Two different ways to introduce a time arrow: 1) The Boltzmann approach; 2) Adiabatic transformations and the Lieb-Yngvason approach}

We have seen in ~\eqref{prop:2.2} that replacement of the entropy of a system of finite ndof by the mean entropy (a ``density'') does not
per se overcome the Schr\"{o}dinger paradox: a time-arrow must be assumed. As it happens, however, currently accepted physical theories
(with the major exception of those describing the expansion of the Universe, see the epilogue) are time-reversal invariant. Moreover, the
corresponding dynamics provide a good description of \textbf{equilibrium states}: KMS states for $T>0$, ground states for $T=0$, see section 2.4.1.

Although it was Boltzmann who discovered (from ``nothing'') the notion of entropy, and the Gibbs-von Neumann entropy is a generalization
of his $``\log(\rho_{\Lambda})''$ version which matches the ``microcanonical ensemble'', he was certainly keenly aware of the Schr\"{o}dinger
paradox (\cite{Leb}), which led him to introduce a new notion of entropy, today known as the Boltzmann entropy (see \cite{Leb} for
a very illuminating introduction to this concept). It was most clearly stated and explained by Narnhofer \cite{Na} in her recent preprint
that, actually, the Boltzmann equation is an Ansatz for a dynamical evolution, and, under reasonable hypotheses \cite{Na}, equilibrium (KMS)  
states are globally stable. There are three main aspects of the Boltzmann approach: 1) the Boltzmann time-evolution already
contains a time-arrow, and therefore does not \textbf{per se} require the concept of adiabatic transformation; 2.) the Boltzmann 
time-evolution does not provide (and does not pretend to provide) a description of equilibrium states; and 3.) the (mean) entropy 
does not, in general, necessarily rise to a maximal value.

The Boltzmann entropy is closely related \cite{PPP} to the R.A. Fisher information, which was proved to be monotonic for the Boltzmann
equation in \cite{ISV}. 

In contrast, our proof of the modified second law (Clausius) in the forthcoming section 3 relies heavily on the concept of adiabatic 
transformation (see section 2.4.2), which is a generalization of the barrier model, because its so-called ``preparation part'' introduces
a time-arrow. In this sense, we are closer in spirit to the approach of Lieb and Yngvason \cite{LY}, who rely (heavily) on their notion
of ``adiabatic accessibility''. For systems with finite ndof their work provides a complete, axiomatic approach to the second law, in the
form of a precise version of the Kelvin-Planck statement. We attempt, in section 3, to provide a dynamic complement to their work; many
of their basic intuitive ideas have been, however, incorporated into our treatment. In addition, since we do not change the dynamics, (initial) 
equilibria continue to be well described by the associated KMS and stability conditions (see section 2.4.1), and final states are 
non-equilibrium steady states (NESS), associated to maximal mean entropy and ``maximal mixedness''.

\subsection{Dynamical properties of states of infinite systems related to the modified second law}

\subsubsection{Equilibrium properties, basic structural transitions underlying the modified second law and Boltzmann's dictum}

Our choice of the initial state $\omega_{0}$ as an equilibrium state may also be expected to be expressible in terms of some properties
of the dynamics, viz., of the function $t \to \omega_{t}$. For temperature $T>0$ this is the well-known KMS boundary condition, see \cite{Hu}. We
shall restrict ourselves in section 3 to the case $T=0$, and the analogous \textbf{ground state condition} is also well-known
(\cite{Hu}, Theorem 3.2): there is a function $F(z)$, analytic for $Im z <0$, uniformly bounded for $Im z \le 0$, continuous on the real
axis, such that, for $z=t$ (real)
\begin{equation}
\label{(2.31)}
F(t) = \omega(\tau_{t}(A)B) \forall A,B \in {\cal A}
\end{equation}
Let $H$ be the GNS Hamiltonian associated to the state $\omega$. We say that $\omega$ satisfies the \textbf{stability condition} iff
\begin{equation}
\label{(2.32)}
H \ge 0
\end{equation}
on the associated GNS space. By (\cite{Hu}, Theorem 3.4), a state which satisfies the ground state condition is invariant and satisfies
the stability condition, and vice-versa.

An initial equilibrium state may undergo, in the limit of infinite time, a basic structural change, for instance from a pure to a mixed state,
or from a mixed state to a ``maximally mixed state''. Note that such transitions may occur only between states of infinite systems. We shall
henceforth always assume that we have to do with an initial state $\omega_{0}$ such that $\omega_{t_{n}} \to \omega$ in the weak*-topology, where
$t_{n}, n=1,2, \cdots$ is any sequence of time-values tending to infinity.      

\begin{proposition}
\label{prop:2.4.1}
Assume ~\eqref{(2.28)}. If the initial state $\omega_{0}$ is a convex combination of a fixed number of pure states, and the final state is
the tracial state ~\eqref{(2.15)}, then modified second law (Clausius) holds in the strict sense. Furthermore, each of the finite-volume approximants
of the final state is the ``most mixed state'' in the sense of Lemma 2.2 of \cite{NT}. Their limit (in the weak*-topology) is also the 
``most mixed state'', in the sense of being associated to maximal mean entropy. Finally, the tracial state incorporates in a natural way
Boltzmann's ``maximum wealth of microstates''.

\begin{proof}

If the initial state is a convex combination of a fixed number of pure states, its mean entropy is zero due to the property of affinity 
~\eqref{(2.27)}, and the facts 1.) that the mean entropy of a pure state is zero and 2.) the mean entropy of the tracial state equals the maximum 
value $\log 2$. Facts 1.) and 2.) follow under condition ~\eqref{(2.28)} by theorems 2.6 and 3.1 of \cite{Wre}, see also \cite{NSY}.   

Lemma 2.2 of \cite{NT} asserts that, for a density matrix $\rho$, if one measures ``mixedness'' by the quantity $Tr(\rho(1-\rho))= 1-D(0,\rho)$,
where, for any two operators $A$ and $B$, $D(A,B) \equiv Tr((A^{*}-B^{*})(A-B))$ and $A^{*}$ denotes the Hermitian conjugate operator, then the
quantity $D(0,\rho)$ is a strictly convex functional of $\rho$. Indeed, for $0 \le \alpha,\beta \le 1$, the Schwarz inequality yields
$D(\alpha \rho_{1} + \beta \rho_{2}) \le (\alpha^{2}+\alpha)Tr(\rho_{1}^{2})+(\beta^{2}+\beta)Tr(\rho_{2}^{2})$. As a consequence, the ``mixedness''
of a state grows with the number of states in a convex combination, and is attained at the ``most mixed state'', the tracial state, which is
the finite-volume version of ~\eqref{(2.15)}. Although the proof holds only for finite number of ndof, if one measures ``mixedness'' of a state 
$\omega$ by the value $s(\omega)$, the tracial state ~\eqref{(2.15)} is, again, the ``most mixed state'', because it exhibits the highest value 
$\log 2$ of the mean entropy.
Consider, now, the set of states
$$
\omega_{\Lambda_{1},\Lambda_{2},\Lambda_{3}} \equiv \otimes_{x \in \Lambda_{3}}[1/2(\omega_{+,x}+\omega_{-,x})] \otimes \omega_{1,2}
$$
where
$$
\omega_{1,2} \equiv \otimes_{x \in \Lambda_{1}} \omega_{+,x} \otimes_{x \in \Lambda_{2}} \omega_{-,x}
$$
with $\Lambda_{1} \cup \Lambda_{2} \cup \Lambda_{3} = \Lambda$ and $\omega_{\pm,x} \equiv  _{x}(\pm|, \dot |\pm)_{x} \forall x \in \mathbf{Z}^{\nu}$.
Above, $|\pm)_{x}$ are the eigenvectors of $\sigma^{3}_{x}$, and $x$ is any vector in $\mathbf{Z}^{\nu}$. Consider the limit, if it exists
in the weak*-topology, of 
$$
\lim_{\Lambda \nearrow \mathbf{Z}^{\nu}} \omega_{\Lambda_{1},\Lambda_{2},\Lambda_{3}} = \tilde{\omega}
$$
We have seen that the ``most probable state'' is $\tilde{\omega} = \omega_{tr}$, the tracial state, if we ``measure'' the probability by 
the mean entropy. If we call ``microstates'' the states $|\pm)_{x}$, the ``most probable state'' corresponds to the choices 
$|\Lambda_{3}| = O(|\Lambda|)$, together with $|\Lambda_{1}| = o(|\Lambda|$ and $|\Lambda_{2}| = o(|\Lambda|$. This may be proved under
the condition ~\eqref{(2.28)} and is a precise formulation of \textbf{Boltzmann's dictum} that the ``most probable state'' corresponds to the
``maximum wealth of microstates''. 
\end{proof} 
\end{proposition}

The above proposition shows that basic structural changes such as $\mbox{ pure } \to \mbox{ mixed }$, or $\mbox{ mixed } \to \mbox{ most mixed }$ are
the cornerstones of the modified second law (Clausius). On the other hand, by ~\eqref{prop:2.2}, the dynamics must also allow a ``time-arrow''.
This leads us to our next subject.

\subsubsection{Adiabatic transformations as generalizations of the barrier model. Sudden interactions}

Concerning irreversible processes, Lieb and Yngvason assert in their basic paper \cite{LY} (p.35): ``the existence of many such processes lies
at the heart of thermodynamics. If they did not exist, it means that nothing is forbidden, and hence there would be no second law''.

In  this connection, an immediate question arises (see also \cite{LY}, Theorem 2.9, ``Caratheodory's principle and irreversible processes).
Caratheodory's principle asserts that ``Any thermodynamical state of a system has neighborhoods which cannot be reached by adiabatic processes''
(\cite{tHW}, p.33). This principle seems, thus, to assert that, for a process or transformation, the implication adiabatic $\Rightarrow$ reversible
is true, or, equivalently, irreversible $\Rightarrow$ non-adiabatic.

But, in connection with the last implication, what about the free irreversible adiabatic expansion of a gas (\cite{tHW}, 2.4.2, p.27)?

Switching for a moment to conventional thermodynamics, with $dQ$ denoting the ``heat exchange'' (a concept criticized by Lieb and Yngvason for
good reason), if $dQ=0$ is posed as a ``definition'' of adiabatic transformation, the formula
\begin{equation}
\label{(2.33)}
\frac{dQ}{T} = dS
\end{equation}
``defining'' the entropy $S$ seems to imply that $dS=0$ (reversibility) if $T \ne 0$ throughout. But, in ~\eqref{(2.33)}, $Q$ is not a function of
state (it depends on the process). The same happens to $T$, which, in ~\eqref{(2.33)} is not a number, but a function which depends on the process-
as is very well-known! Indeed, a counterexample to the implication irreversible $\Rightarrow$ non-adiabatic is provided by the following
\begin{remark}
\label{Remark 2.4.1}
\textbf{The barrier model}
We consider $N$ particles of an ideal gas, initially inside a container of volume $V$, which is allowed to expand adiabatically into another
container of the same volume $V$ (for simplicity), in which, originally, vacuum had been established (Gay-Lussac experiment, \cite{tHW}, p.36).
The change
$$
\Delta S \equiv S(U,2V,N) - S(U,V,N)
$$
of the entropy $S$, with the internal energy $U$ fixed (it is a constant of the motion) is found to be \cite{tHW}
$$
\Delta S = kN \log(2) \mbox{ or, equivalently } \Delta s = k \log(2)
$$
In order to obtain the above formula, it was assumed that the gas will fill the second container uniformly, when equilibrium is attained.
Although not mentioned in \cite{tHW}, this assumption depends crucially on two points: a.) after removal of the barrier, a \textbf{new}
equilibrium state is established for the system; b.) this state must be compatible with the \textbf{ab-initio} assumptions on the states
of the theory, corresponding to a given universality class. In the present case, \textbf{translation-invariance} ~\eqref{(2.3)} determines
the final configuration. Since we are dealing with a finite system, this assumption is replaced by that of periodic boundary conditions, 
as mentioned after ~\eqref{(2.3)}.
\end{remark}

A generalization of the barrier model leads to the concept of \textbf{adiabatic transformations}. We have seen that they may be irreversible,
in agreement with second law (Clausius), but it remains to find a proper generalization of the concept, which is both physically natural
and introduces a mathematically precise concept of ``sudden perturbations''. Somewhat surprisingly, the physics literature on such perturbations
seems to be very scarce, an exception being Pauli's text \cite{Pa}. For systems with finite number of ndof, Lieb and Yngvason solved this problem 
by introducing the alternative concept of ``adiabatic accessibility'' \cite{LY}.
One further aspect is important in our generalization of the barrier model: the \textbf{initial state}. It is an equilibrium state in our
tratment of the barrier model, which evolves into a non-equilibrium state by the ``preparation part'' of the adiabatic transformation: this
is the ``new'' equilibrium state in part a.) above, after removal of the barrier. In the original barrier model, this new non-equilibrium state
instantaneously evolves into the final state, which occupies both containers uniformly, as being the only configuration satisfying the 
ab-initio conditions imposed on the states. Our generalization removes the instantaneous transition, allowing for a class of perturbations
taking place within a non-zero time interval.

A formulation of the concept of adiabatic transformation along the lines we present was first done by Penrose \cite{Pe}, but without the 
concept of sudden interactions.

\begin{definition} 
\label{Definition 2.4.2}
Let a C*-dynamical system $({\cal A}, \tau_{t})$ be given, with $t \in [-r,\infty)$. An \emph{adiabatic transformation} of a given (initial) state
$\omega_{in}$ consists of three successive steps. The first (resp. third) steps are dynamical automorphic evolutions of the state $\omega_{in}$ (for
$t < -r$) (resp. $\omega_{0}$ for $t > 0$), of the form
\begin{equation}
\label{(2.34)}
t \in \mathbf{R} \to \omega_{t} \equiv \omega \circ \tau_{t}
\end{equation}
where the circle denotes composition, i.e., $(\omega \circ \tau_{t})(A)=\omega(\tau_{t}(A))$, and $\tau$ denotes a given automorphism group
under which $\omega_{in}$ is invariant. The \textbf{second step}, from $-r \le t \le 0$, called \textbf{preparation of the state}, starts at some 
$t=-r$, with $r>0$, when the state $\omega_{-r}$ is invariant under the automorphism $\tau_{-r}$, and ends at $t=0$, when $\omega_{0}$ is
invariant under the automorphism $\tau_{0}$; we require the cyclicity condition
\begin{equation}
\label{(2.35)}
\tau_{-r}=\tau_{0}=\tau^{1}
\end{equation}
It is assumed to be \textbf{nontrivial}, i.e., the state after preparation is \textbf{not} invariant under the evolution:
\begin{equation}
\label{(2.36)}
(\omega_{0} \circ \tau_{t} \ne \omega_{0} \mbox{ for } t > 0
\end{equation}
\end{definition}

\begin{remark}
\label{Remark 2.4.2}
\textbf{Sudden perturbations}
If
\begin{equation}
\label{(2.37)}
\tau^{1} = \tau
\end{equation}
the automorphism is \textbf{continuous} at $t=-r$ and $t=0$. The case 
\begin{equation}
\label{(2.38)}
\tau^{1} \ne \tau
\end{equation}
is a mathematically precise definition of a \textbf{sudden interaction}: it represents a ``finite discontinuity'' which replaces the
``$\delta(t)$- infinity'' of the lifting of barriers in constrained equilibria (\cite{Pa}), such
as the barrier model of remark ~\ref{Remark 2.4.1}. In the examples, however, the infinity appears in a ``coupling constant'' associated to
the perturbation.
\end{remark}

\begin{remark}
\label{Remark 2.4.3}
Note that the automorphism $\tau$ in definition ~\ref{Definition 2.4.2} refers to a \emph{time-independent} interaction. Between $t=-r$ ant $t=0$ 
in the above definition a ``time-dependent Hamiltonian'' is supposed to act. Indeed, according to Definition ~\ref{Definition 2.4.2}, the system 
is closed from $t=0$ to any $t>0$, but not from $t=-r$ to $t=0$, where it is subject to \emph{external conditions}, but is still thermally isolated. 
As informally observed in (\cite{LY}, p.23): ``Hence, as far as the system is concerned, all the interaction it has with the external world during
an adiabatic process can be thought of as being accomplished by means of some mechanical or electrical devices''. This remark by Lieb and Yngvason 
agrees with the intuition which guided our definition ~\ref{Definition 2.4.2}: the ``preparation part'' of the adiabatic transformation thus defined
represents an interaction with the environment of the form preconised by them. The case of sudden perturbations ~\eqref{(2.38)} also illustrates
their remarks on p.17: ``The word 'adiabatic' is sometimes used to mean 'slow' or 'quasi-static', but nothing of the sort is meant here. Indeed,an
adiabatic process can be quite violent. The explosion of a bomb in a closed container is an adiabatic process''. Sudden interactions play an 
important role in our proof of the modified second law in section 3.
\end{remark}

One last, but very important property of an adiabatic transformation is

\begin{theorem}
\label{th:2.4.2}
\textbf{time-arrow theorem}
The preparation part of a system undergoing an adiabatic transformation as defined in ~\ref{Definition 2.4.2} necessarily breaks time-reversal
invariance, that is, there exists some scalar function of the states and observables of the system which is not invariant under time-reversal.

\begin{proof}
By ~\eqref{(2.36)}, there exists some observable $\tilde{A} \in {\cal A}$ such that the function
\begin{equation}
\label{(2.39)}
t \to f_{\omega_{0}}(t,\tilde{A}) \equiv \omega_{0}((\tau_{t}-\mathbf{1})\tilde{A})
\end{equation}
is such that
\begin{equation}
\label{(2.40)}
f_{\omega_{0}}(t,\tilde{A}) \ne 0 \mbox{ for } t>0
\end{equation}
but, by invariance of $\omega_{0}$ under $\tau_{0}$, $\forall A \in {\cal A}$,
\begin{equation}
\label{(2.41)}
f_{\omega_{0}}(0,\tilde{A}) = 0
\end{equation}
Therefore, the preparation of the system undergoing an adiabatic transformation as defined in ~\ref{Definition 2.4.2} breaks 
time-reversal invariance.
\end{proof}
\end{theorem}

\subsubsection{Quantum chaotic states: the role of dynamics and the classical character}

The analog of ``chaotic state'' in section 1 for quantum systems may hold, rather surprisingly, for a class of examples of qss. Take, for
instance, in the weak*-topology
\begin{equation}
\label{(2.42)}
\omega_{1,2}^{\pm} = \lim_{\Lambda \nearrow \mathbf{Z}} \omega_{\Lambda,1,2}^{\pm}
\end{equation}
where
\begin{equation}
\label{(2.43)}
\omega_{\Lambda,1,2}^{\pm} \equiv \otimes_{x \in \Lambda}|\pm)_{x,1,2} \mbox{ with } \sigma_{x}^{1,2}|\pm)_{x,1,2} = \pm |\pm)_{x,1,2} 
\end{equation}
Then, by (\cite{Ra}, (6), p.2950), for the Ising model in dimension $\nu=1$,
\begin{equation}
\label{(2.44)}
\omega_{1,2}^{\pm}(\tau_{t}(\sigma_{x}^{1,2}) = \prod_{j \in \mathbf{N}} \cos^{2}(2 \epsilon(|j|)t)
\end{equation}
For the Dyson model, 
\begin{equation}
\label{(2.45)}
\epsilon(|j|) = \frac{1}{2(|j|)^{s}} \mbox{ with } j \ne 0, s>1
\end{equation}
we find
\begin{equation}
\label{(2.46)}
\omega_{1,2}^{\pm}(\tau_{t}(\sigma_{x}^{1,2})= Cl_{1/2;s>1}^{2}
\end{equation}
where $Cl$ denotes the Cloitre function introduced by Albert and Kiessling \cite{AK}, given by
\begin{equation}
\label{(2.47)}
Cl_{p;s}(t) \equiv \prod_{n \in \mathbf{N}} (1-p+p\cos(n^{-s}t))
\end{equation}
The strong chaotic fluctuations of $Cl_{p;s}(t)$ for large $t$ are given in the figure below, which is Fig.2 of \cite{AK}, here
reproduced with the kind permission of Michael Kiessling:

\includegraphics[height=10cm,width=15cm]{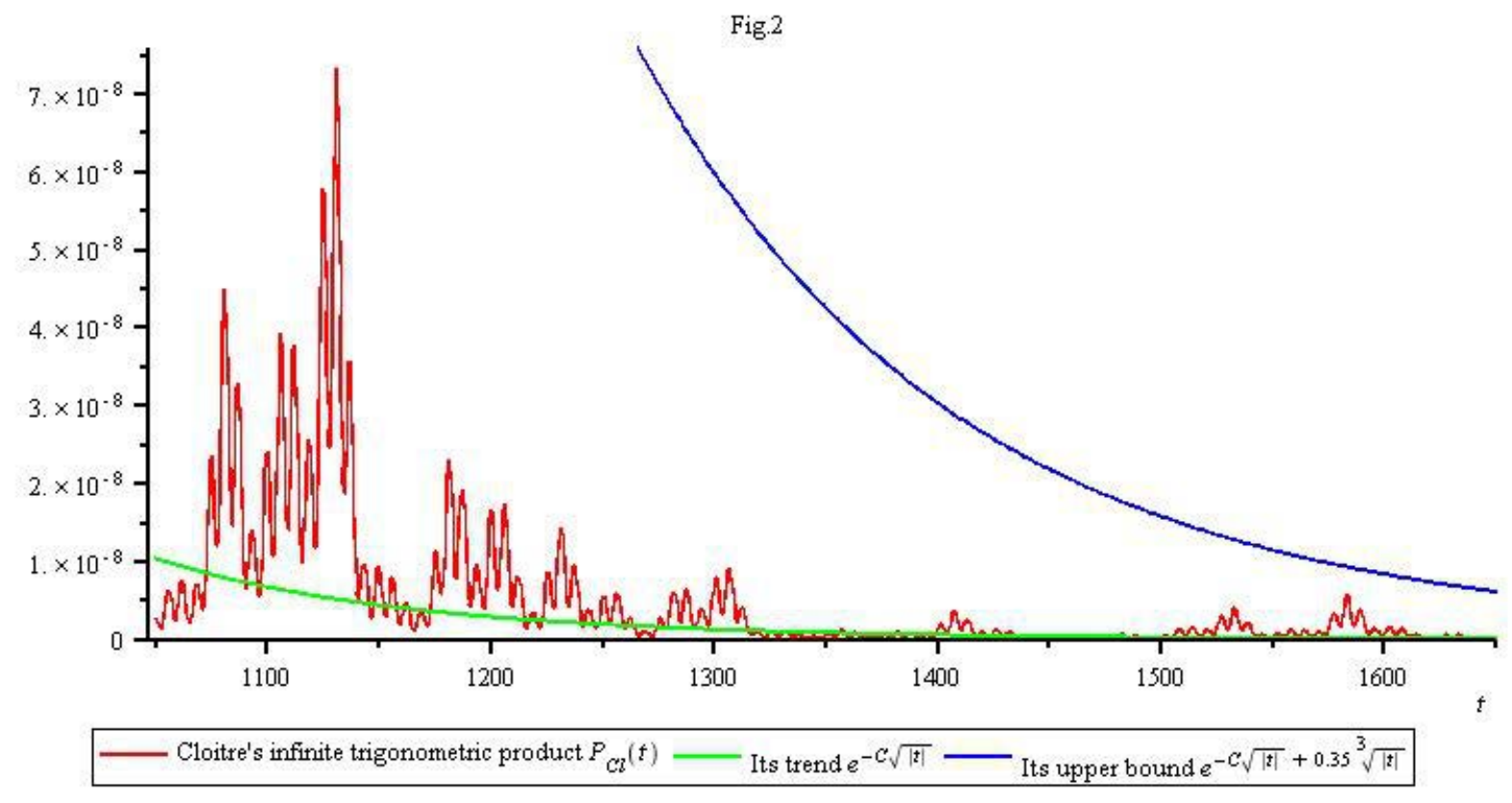}

It may be expected from the figure that, for times $t=1, \cdots, n$, \textbf{exponential sensitivity to initial conditions} holds, i.e., 
$$
|\omega(\tau_{1}(\sigma_{x}^{1})) - \omega(\tau_{1+2^{-n}}(\sigma_{x}^{1}))| = O(2^{-n}) \mbox{ but }
|\omega(\tau_{2^{n}}(\sigma_{x}^{1})) - \omega(\tau_{1+2^{n}}(\sigma_{x}^{1}))| = O(1)
$$
The above means that any initial imprecision in the knowledge of initial conditions becomes exponentially amplified in the course of time.
We are far, however, from being able to \textbf{prove} this conjecture! The above represents, nevertheless, strong graphical evidence 
that \textbf{quantum chaos} may exist for states of infinite systems!

It is important to observe that quantum chaos is restricted to a given universality class of interacting systems. The exponential model
~\eqref{(2.10)} has trivial, soluble dynamics, with a decay of the function $\omega_{1,2}^{\pm}(\tau_{t}(\sigma_{x}^{1,2}) = \sin(t)/t$
\cite{Em}, while the corresponding functions in the Dyson models display the behavior depicted in the present section. Given that $\nu=1$,
the exponential model displays no phase transition, while the Dyson models do (as first shown by Dyson in \cite{Dy}). Therefore, the
mechanism of approach to equilibrium depends on the universality class, see the epilogue.

For $\nu=1$ and only nearest-neighbor interactions, the Ising model is equivalent to a \textbf{Fermi lattice gas}: the interaction contains 
a term
\begin{equation}
 \label{(2.48)}
H_{\Lambda,I} \equiv -1/2 \sum_{x \in \Lambda} a_{x}^{*}a_{x+1}^{*}a_{x+1}a_{x}
\end{equation}
where $a_{x}, a_{x}^{*}$ are operators satisfying mixed commutation and anticommutation relations $[a_{x},a_{x}^{*}]=0$ if $x \ne y$, but
${a_{x},a_{x}^{*}}=1$ (\cite{BRo2}, p.426). In contrast, the ``XY part'' is a quadratic form in the creation-annihilation operators.

The chaotic nature of the dynamics does vary strongly with certain parameters, \textbf{particularly with the time}. The reader should compare 
Fig.2 of the paper by Albert and Kiessling, reproduced in the present paper, with their Fig.1, for small times, which displays
significantly less chaos. We may understand this phenomenon intuitively if we recall that chaotic dynamics is a generic aspect of
classical mechanics (classical dynamical systems), and we expect that the classical limit is defined by a condition of the form
\begin{equation}
\label{(2.49)}
\frac{\hbar}{ET} \ll 1
\end{equation}
where $E$ denotes an energy value in the typical range allowed by the system, and $T$ a corresponding typical time value: the quantity
$\frac{\hbar}{ET}$ is of course dimensionless. In our case, we may take in the Dyson model $E$ as the maximum value in ~\eqref{(2.45)}
and $T$ as the time (see ~\eqref{(2.44)}). In this sense, comparison between the two figures in \cite{AK} also  provides strong graphical
evidence of the dictum: the limit of large times is classical!

\section{The second law as a theorem for two classes of models of interacting spins at $T=0$: the necessity of extended perturbations}

In this section we prove the modified second law (Clausius) for two classes of models. In order to do so, we must add some definitions.

A \textbf{factor} or \textbf{primary} state is defined by the condition that the center
\begin{equation}
\label{(3.1)}
Z_{\omega}= \{\lambda \mathbf{1} \}
\end{equation}
with $\lambda \in \mathbf{C}$.

For quantum spin systems the center $Z_{\omega_{\beta}}$ coincides  with the so called algebra at infinity
$\zeta_{\omega}^{\perp}$, which corresponds to operations which can be made outside any bounded set. As a typical example of an observable
in $\zeta_{\omega}^{\perp}$, let $\omega$ be any translation invariant state. Then the space average of A
\begin{equation}
\label{(3.2)}
\eta_{\omega}(A) \equiv s-lim_{\Lambda \nearrow \infty} \frac{1}{|\Lambda|} \sum_{x \in \Lambda} \Pi_{\omega}(\tau_{x}(A))
\end{equation}
exists, and, if $\omega$ is ergodic, then 
\begin{equation}
\label{(3.3)}
\eta_{\omega}(A) = \omega(A) \mathbf{1}
\end{equation}
This corresponds to ``freezing''
the observables at infinity to their expectation values. The following definition is abstracted from \cite{He}, before his Lemma 1.

\begin{definition}
\label{Definition 3.1}
Two states $\omega_{1}$ and $\omega_{2}$ are \textbf{disjoint} if no subrepresentation of
of $\Pi_{\omega_{1}}$ is unitarily equivalent to any subrepresentation of $\Pi_{\omega_{2}}$ - otherwise they are called \textbf{coherent}.
Two states which induce disjoint representations are said to be disjoint.
\end{definition}

For finite-dimensional matrix algebras (with trivial center) all representations are coherent, and factor representations as well.

We have (\cite{He}, Lemma 6): Let $\omega_{1}$ and $\omega_{2}$ be extremal invariant (ergodic) states with respect to space
translations. If, for some $A \in {\cal A}$,
\begin{equation}
\label{(3.4)}
\eta_{\omega_{1}} (A) = a_{1} \mbox{ and } \eta_{\omega_{2}}(A) = a_{2} \mbox{ with } a_{1} \ne a_{2}
\end{equation}
then $\omega_{1}$ and $\omega_{2}$ are disjoint.

\begin{definition}
\label{Definition 3.2}
We say that a perturbation $V$ is \textbf{extended} iff it is a functional of the elements of all finite parts of $\mathbf{Z}^{\nu}$. It
is termed \textbf{local} iff it is of the form $V=a(\Lambda_{0})$, where $a(\Lambda_{0}) \subset {\cal A}_{\Lambda_{0}}$, with $\Lambda_{0}$
a fixed subset of $\mathbf{Z}^{\nu}$.
\end{definition}

\begin{theorem}
\label{th:3.1}
Let
\begin{equation}
\label{(3.5)}
H_{\Lambda} = \frac{1}{2} \sum_{x,y \in \Lambda} J(|x-y|) \sigma_{x}^{3}\sigma_{y}^{3} \mbox{ with } J(|x|) < 0
\end{equation}
denote the generalized ferromagnetic Ising model (gIm), and $\tau$ the corresponding automorphism corresponding to the infinite system. Then, 
for both the exponential model ~\eqref{(2.10)}, and the Dyson model ~\eqref{(2.11)} with $\alpha > 2$, there exists an infinite number of 
distinct adiabatic transformations satisfying ~\eqref{(2.38)}, i.e., associated to a class of sudden perturbations, each of extended 
type, such that modified second law (Clausius) for $T=0$ holds. For perturbations of local type of the ferromagnetic anisotropic Heisenberg model 
~\eqref{(2.8)}, the modified second law (Clausius) does not hold.

\begin{proof}

Let $\tau$ denote the time-translation automorphism associated to the finite-region Hamiltonian ~\eqref{(3.5)}. In the step of preparation of
the state in definition ~\ref{Definition 2.4.2} of adiabatic transformation, we take
\begin{equation}
\label{(3.6)}
\tau_{-r} = \tau_{0} = \tau_{1} \ne \tau
\end{equation}
That is to say, ~\eqref{(2.38)} is satisfied, characterizing a sudden interaction, where $\tau_{1}$ denotes the automorphism defined by
a finite-volume Hamiltonian 
\begin{equation}
\label{(3.7)}
H(\lambda, \Lambda) \equiv H_{\Lambda}+\lambda V_{\Lambda}
\end{equation}
Above, $V_{\Lambda}$ is a perturbation, and $\lambda$ a coupling constant.
We take the initial state $\omega_{in}=\omega_{0}^{+}$ as one of the ground states of the Hamiltonian defined by ~\eqref{(3.5)}, the
other case, $\omega_{0}^{-}$, as well as any convex combination of the two, being similar. Choose the perturbation $V_{\Lambda}$ as a
tranverse, sufficiently strong magnetic field 
\begin{equation}
\label{(3.8)}
V_{\Lambda} \equiv \sum_{k=-\Lambda}^{\Lambda} V_{k}
\end{equation}
where
\begin{equation}
\label{(3.9)}
V_{k} \equiv 2(-\sum_{x \in A_{k}} h_{x}\sigma_{x}^{1} + \sum_{x \in B_{k}} h_{x} \sigma_{x}^{1})
\end{equation}
with $h_{x} ; x \in \mathbf{Z}$ a sequence of positive numbers satisfying the condition
\begin{equation}
\label{(3.10)}
h_{x} > \sup_{x \in \mathbf{Z}; x \ne 0} (\psi^{-|x|}, |x|^{-\alpha})
\end{equation}
where $\alpha > 2$ and $\psi > 1$. Above,
\begin{equation}
\label{(3.11)}
A_{k} \equiv [kC+1, \alpha_{k}] \mbox{ and } B_{k} \equiv [\alpha_{k}+1, (k+1)C]
\end{equation}
where
\begin{equation}
\label{(3.12)}
\alpha_{k} \equiv \frac{C}{2}(1+f) \mbox{ and } \beta_{k} \equiv \frac{C}{2}(1-f)
\end{equation}
Above, 
\begin{equation}
\label{(3.13)}
0 < f \equiv \frac{A}{B} < 1
\end{equation}
and $C$ is chosen such that
\begin{equation}
\label{(3.14)}
C \ge 2B
\end{equation}
If $A$ and $B$ are chosen as positive integers, $f$ ranges over all rational numbers between zero and one. 

With our choice $T=0$, the restriction of the initial ground state to a region $\Lambda$, $\omega_{\Lambda,0}^{+}$, shifts under conditions
above (~\eqref{(3.7)} et seq.), \textbf{in the limit of infinite coupling} 
\begin{equation}
\label{(3.15)}
\lambda \to \infty
\end{equation}
to the ground state of $V_{\Lambda}$, which is the vector state $(\Psi_{\Lambda}^{f}, \cdot \Psi_{\Lambda}^{f})$,
with
\begin{equation}
\label{(3.16)}
\Psi_{\Lambda}^{f} = \otimes_{k=-\Lambda}^{\Lambda} \Psi_{k}^{1} \otimes \Psi_{k}^{2}
\end{equation}
where
$$
 \Psi_{k}^{1} \equiv \otimes_{x \in A_{k}} |+)_{x}^{1} \otimes \cdots |+)_{x}^{1}
$$
and
$$
\Psi_{k}^{2} \equiv \otimes_{x \in B_{k}} |-)_{x}^{1} \otimes \cdots |-)_{x}^{1}
$$

In fact, the limit of infinite coupling is precisely defined by the above conditions.

Above, by definition, $\sigma_{x}^{1}|\pm) = \pm \frac{1}{2} |\pm)$, and the sets $A_{k}$ and $B_{k}$  satisfy $A_{k} \cup B_{k} = [kC+1,(k+1)C]$. 
Moreover, it follows that $C>1$. The limiting state
\begin{equation}
\label{(3.17)}
\omega_{f}^{1} \equiv \omega_{0}^{+} \circ \tau_{1} \equiv \lim_{\Lambda \nearrow \mathbf{Z}} (\Psi_{\Lambda}^{f}, . \Psi_{\Lambda}^{f})
\end{equation}
satisfies the crucial condition ~\eqref{(2.36)}, i.e., it is not a convex combination of the two ground states $\omega_{0}^{\pm}$. 

In order to prove this, we need only show that $\omega_{f}^{1}(A) \ne \alpha \omega_{0}^{+}(A) + \beta \omega_{0}^{-}(A)$ for all $\alpha$
and $\beta$, and some observable $A$. Indeed, take $A= A_{0} = \sigma_{x}^{1}$ for some fixed $x \in \mathbf{Z}$. Then, by the above construction
it follows that $\omega_{f}^{1}(A_{0}) \ne 0$, but $\alpha \omega_{0}^{+}(A_{0}) + \beta \omega_{0}^{-}(A_{0}) = 0$ for all $\alpha$, $\beta$.

In fact, $\omega_{f}^{1}$ is a pure state, characterized by a rational number $f$ satisfying ~\eqref{(3.13)}. Any two of these pure states 
are disjoint by ~\eqref{(3.2)} and Hepp's lemma ~\eqref{(3.4)}, with the analogue of $\eta_{\omega}$ in ~\eqref{(3.4)} defined by
\begin{equation}
\label{(3.18)}
\eta_{\omega}(\sigma^{1}) \equiv s-lim_{\Lambda \nearrow \mathbf{Z}} \sum_{x \in \Lambda} \frac{\sigma_{x}^{1}}{|\Lambda|}
\end{equation}
The values of the ``macroscopic pointer positions'' are, in the present case, the different rationals satisfying ~\eqref{(3.12)}. We have
thus a countable infinity of distinct pure states (being disjoint, they cannot be unitarily equivalent, and thus not equal, two by two),
parametrized by all rationals between zero and one. They satisfy the assumption of proposition ~\ref{prop:2.4.1} concerning the limit state
being the tracial state, by theorem ~\ref{th:2.1}. The modified second law (Clausius) then follows from proposition ~\ref{prop:2.4.1}. Finally,
the second assertion of the theorem follows from \cite{BKR}, because the anisotropic ferromagnetic Heisenberg Hamiltonian ~\eqref{(2.8)} has
a gap.
\end{proof}
\end{theorem}

\section{Epilogue: Why does the mean entropy increase? The Universe, the time-arrow, AQFT and Earman's principle}

\subsection{Why does the mean entropy increase? The role of dynamics and universality classes in selecting the physical mechanism of approach to a NESS}

We have seen that, when the mean Gibbs-von Neumann entropy is regarded as a functional of the states of infinite systems, it may increase with time
under an adiabatic transformation to a maximum value. The non-existence of a time-arrow for finite systems (Schr\"{o}dinger's paradox, see \cite{Leb})
remains true for states of infinite systems (proposition ~\ref{prop:2.2}).

There are at least two ways to introduce a time-arrow, reviewed in section ~\eqref{(2.3)}: through the Boltzmann equation and evolution, or  
through the preparation part of the adiabatic transformations of section 2.4.2, which improves on the versions of \cite{Wre} 
and \cite{Pe} through the inclusion of sudden interactions.

Theorem ~\ref{th:3.1} treats two different classes of ferromagnetic models, the exponential ~\eqref{(2.10)} and the Dyson model ~\eqref{(2.11)}, which
lie in different universality classes associated to the theory of phase transitions and the critical point in statistical mechanics: given a value
of the dimension (one) and a type of symmetry (discrete), they differ in the type of decay of the interaction potential: exponential or polynomial.
In both cases, for $T=0$, there exists a basic structural transition between states (from pure to mixed) in the mathematical limit of infinite time    
(which physically represents an universal time largely exceeding a relaxation time, see also \cite{Wre}). Similarly, in both cases the initial state 
may be any convex combination of a fixed number of pure states, and the final state must be rather special, viz. the tracial state, of maximum mean
entropy, which comprises with Boltzmann's dictum of the ``maximal wealth of microstates'' (see proposition 2.3). 

There exists, however, a profound difference in the \textbf{mechanisms} of approach to the final NESS: in the exponential model the dynamics
is exactly soluble (with a decay of suitable expectation values of type $\frac{\sin t}{t}$ \cite{Em}, while, in the Dyson case, it is not exactly
soluble and there exists strong graphical evidence that it displays the phenomenon of quantum chaos (section 2.4.3), as a consequence of
the results of Albert and Kiessling \cite{AK} on the Cloitre function for large arguments (time-values). We propose to associate this phenomenon 
to \textbf{Krylov's mechanism} \cite{Pe} of successive ``defocalizing shocks'' beginning at the position of the removed barrier. Since this 
mechanism is believed to be the physical mechanism of approach to equilibrium, we are confident that the present framework may be, in certain
cases, physically relevant. In this connection, we should mention a fourth important aspect characterizing universality classes: interacting 
and non-interacting systems (see, in this respect, \cite{MRW}). We have seen that the Ising model may be interpreted as an interacting Fermi
lattice gas (see ~\eqref{(2.48)}). The XY model, on the other hand, is a quadratic form in the fermion creation and annihilation operators in
this same picture, and may, in this sense, be called ``non-interacting''. Although the XY-model dynamics is not chaotic, the corresponding process 
of ``return to equilibrium'' is rather subtle, and a complete analysis of it may be found in \cite{HuRo}.

\subsection{The Universe and the time-arrow, and the role of quantum fields}

With our definition of the adiabatic transformation in section 2.4.2, we succeeded in decoupling the validity of the second law from the
nature of the initial state of the Universe. Boltzmann, however, thought that a reference to the initial state was inevitable (see \cite{Leb} and
references given there). His conjecture was, however, ultimately correct, because the problem arises as soon as we attempt to apply proposition
2.3 to the Universe. The first natural question would be: why do so many adiabatic transformations occur in the Universe?
Since, however, adiabatic transformations break time-reversal invariance (theorem 2.3), Boltzmann's previous conclusion suggests itself
naturally. And, indeed, the simplest model of the Universe - the ``hot big-bang theory'' ( \cite{Lu})- confirms it, because it leads to an 
ever-expanding Universe, which is one of the phenomena associated to the ``arrow of time''. The details of the expansion depend, however, on
much more refined models. One of them, due to Maeder \cite{Ma}, explores an additional symmetry - the scale-invariance of the macroscopic
empty spaces - which does not require the assumptions of ``dark energy'' or ``dark matter''.

A preliminary model of the CMB (``cosmic background radiation'') which fills the Universe today with a black-body spectrum of temperature
$T \approx 2,7K$ assumes a system of free photons, i.e., which do not interact with the Robertson-Walker metric \cite{Lu}. It requires only
free quantum fields. In general, interacting quantum fields - in principle, described by algebraic quantum field theory (AQFT) \cite{CH} - 
will be required. An important difference between our ferromagnetic models and AQFT is the fact that, in contrast to AQFT, the separating 
character of local algebras does not hold, because our ground states posess annihilators among the local algebras ${\cal A}_{\Lambda}$, viz.
$$
\sum_{x \in \Lambda} (\sigma_{x}^{3} \pm 1/2)
$$  
For this reason, the deep theorem in \cite{CH} that the set of entangled states is generic, is not necessarily true for infinite quantum
ferromagnets.

The issue of the Reeh-Schlieder property in curved space-time is a deep and difficult one: see \cite{Sa}.

\subsection{Some problems posed by AQFT}

It would be very nice to extend the present framework to AQFT, but there are several problems in this connection. Firstly, one may 
conjecture that physical theories such as quantum electrodynamics may be ultimately described by AQFT, but, unfortunately, as observed by 
Lieb and Loss \cite{LL}, the quantity $E_{0}(\Lambda, \Gamma)$, where $\Lambda$ denotes a space cutoff and $\Gamma$ an ultraviolet
cutoff, which is the ground state energy to be subtracted from the Hamiltonian $H_{\Lambda, \Gamma}$, to account for the condition of stability
~\eqref{(2.32)}, is not correctly described by  perturbation theory, which, in the mathematically rigorous approach of ``formal series'', predicts
$O(|\Gamma|^{2})$, contradicting the Lieb-Loss inequalities (see \cite{LL} and references given there). This means that any progress depends
on \textbf{nonperturbative} exact or rigorous results on the dynamics, which are not yet available.

A different approach to the ``time-arrow'' in AQFT is due to Buchholz and Fredenhahen, see their review in arXiv 2305.11709, and references given
there. We agree with their view to the extent that we relate the ``time-arrow'' to \textbf{our} version of the second law, whis is \textbf{not}
of statistical nature, because our time-evolution is deterministic, and the ``non-unitarity'' is a direct consequence of the preparation part of an
adiabatic transformation. The statistical character is a consequence of the whole approach, and is expressed by Proposition 2.3.   

\subsection{Earman's principle}

In the philosophy of physics, Earman's principle \cite{Ear} states: ``While idealizations are useful and, perhaps, even essential to progress in
physics, a sound principle of interpretation would seem to be that no effect can be counted as a genuine physical effect if it disappears when the 
idealizations are removed''. The present paper is concerned by this principle, and we believe it is very instructive to explain why our results 
are compatible with it. At the same time, we hope to contribute to identify the complex variety of mechanisms which are responsible for this 
compatibility.

One good example concerns the ferromagnetic Ising model with nearest neighbor interactions. For dimensions $\nu \ge 2$ it displays a ferromagnetic
phase transition. In the thermodynamic limit, the susceptibility $\chi(T)$ displays a well-known singularity
\begin{equation}
\label{(4.1)}
\chi(T) \approx (T-T_{c})^{-\gamma} \mbox{ for } T \to T_{c}
\end{equation}
where $\gamma = 7/4$ if $\nu = 2$, and $\gamma = 1,25$ if $\nu = 3$ (see \cite{Hu}, p. 396). On the other hand, for finite volume $|\Lambda|$,
$\chi_{\Lambda}(T)$ is a continuous function of $T$. It happens, however, that, for finite but ``sufficiently large'' $|\Lambda|$ (characteristic
of macroscopic systems), the images of the points $\chi_{\Lambda}(T)$ near $T_{c}$ agree with ~\eqref{(4.1)} to extremely high accuracy, although,
for each finite $|\Lambda|$, $\chi_{\Lambda}(T_{c})$ is finite. Thus: the idealized situation (thermodynamic limit) allows no discontinuity but 
remains extremely close to the concrete physical situation.

In the real physical situation (finite systems) the volume $|\Lambda|$ appears explicitly in the various expressions, and it is, therefore,
necessary to substitute experimental values to arrive at concrete results. The above example suggests that the thermodynamic limit - the idealized
situation in Earman's principle - is a \textbf{universal} (viz. does not involve $|\Lambda|$ explicitly) way to describe the large $|\Lambda|$- behavior
of real physical systems. We conjecture that the same phenomenon occurs in the theory of approach to equilibrium (or to a NESS, non-equilibrium
steady state, as in the main text), but now in the \textbf{large time limit}, i.e., for all times greatly exceeding a characteristic 
\textbf{relaxation time}. As in the theory of critical phenomena, the nature of this limit (in particular, the mechanism of approach to equilibrium)
depends on \textbf{universality classes} of dynamical behavior, being ``quantum chaotic'' in certain interesting cases, as in the Dyson model.

As remarked by Wigner in \cite{Wig}, ``it is not at all natural that natural laws exist, much less that man is able to discover them''. This is
a point which complements that part in Earman's principle which states the conjecture that  idealizations (such as considering states of infinite 
systems) may be \textbf{essential} to understand these laws. Schr\"{o} dinger \cite{Schr} even says that the ``second miracle'' in Wigner's quote 
``may well be beyond human understanding''.

We have seen that it is necessary to consider infinite systems in order to \textbf{define} the large (infinite) time limit. This involves an 
additional difficulty: there are several topologies on the set of states. The argument of \textbf{physical naturalness} in section 2.2 places an 
additional requirement on the choice of topology. It shows that our result is compatible with Earman's principle, but involves a deeper discussion
of the relation between mathematics and physics, akin to the issues discussed by Wigner \cite{Wig} in the paper whose title provides part of the answer. 

One last example of Earman's principle is the limit of infinite coupling ~\eqref{(3.15)}. It is expected that the final non-equilibrium state 
$\omega_{f}$ is ``close'' (in the weak*-topology!) to the ``real'' physical state for ``sufficiently large'' coupling. We are, however, far from
being able to prove this! This difficulty is also present in the original ``barrier models'' as described in the text, and shows yet another aspect
of the above-mentioned conjecture that idealizations may be essential to the progress of physics. Indeed, the barrier model is the \textbf{only} 
nontrivial exactly soluble model in conventional thermodynamics, and students of thermodynamics would finish up with empty information if one 
expected that this unique  model of an irreversible adiabatic expansion turned up to be a fake, i.e., misleading regarding ``real physics''!     
 
\textbf{Acknowledgements}

I should like to thank Elliott Lieb and Heide Narnhofer, who have strongly influenced
the present paper, through various scientific exchanges, very heartily. Heide also helped me to improve the final version with several important
remarks. I am also indebted to the late Derek Robinson, who insisted on the relevance
of the property of affinity, which plays an important role in this paper. Finally, I should like to thank Michael Kiessling for his kind
permission to include Fig.2 of his paper with Albert, which turned out, unexpectedly, to describe the chaotic quantum dynamics of the
whole class of Dyson models.

I should like to thank Pedro L. Ribeiro for several constructive comments, in particular his remark about the Reeh-Schlieder property in curved
space-time, as well as D. Buchholz for his remarks and reference concerning the ``time-arrow'' issue.

Last but not least I should like to thank Klaas Landsman for his very good suggestion of including a discussion of Earman's principle.

\end{document}